\renewcommand{\emph}[1]{\textit{\textbf{#1}}}
\newenvironment{proof}{\noindent{\bf Proof:}}{\hspace*{\fill}\rule{6pt}{6pt}\bigskip}
\newtheorem{theorem}{Theorem}[section]
\newcommand{\comment}[1]{}
\renewcommand{\Pr}{\mathop{\bf Pr}\nolimits}
\newcommand{\E}{\mathop{\bf E}\nolimits}
\long\def\skipthis#1{}
\begin{document}

\title{On the Zero-Error Capacity Threshold for Deletion Channels}

\author{Ian A. Kash \and
Michael Mitzenmacher \and
Justin Thaler \and
Jonathan Ullman\\
School of Engineering and Applied Sciences\\
Harvard University,
Cambridge, MA 02138\\ Email: \{kash, jthaler, jullman\}@seas.harvard.edu \, \, michaelm@eecs.harvard.edu} 
\date{}

\maketitle


\begin{abstract}
We consider the zero-error capacity of deletion channels.
Specifically, we consider the setting where we choose a codebook
${\cal C}$ consisting of strings of $n$ bits, and our model of the
channel corresponds to an adversary who may delete up to $pn$ of these
bits for a constant $p$.  Our goal is to decode correctly without
error regardless of the actions of the adversary.  We consider what
values of $p$ allow non-zero capacity in this setting.  We suggest
multiple approaches, one of which makes use of the natural connection
between this problem and the problem of finding the expected length
of the longest common subsequence of two random sequences.  
\end{abstract}

\section{Introduction}
There has recently been a great deal of work studying variations of
the following channel: $n$ bits are sent, but each bit is
independently deleted with fixed probability $p$.  This is the {\em
binary independently and identically distributed (i.i.d.) deletion
channel}, also called the {\em binary deletion channel} or just the
{\em deletion channel}.  To be clear, a deletion is different from an
erasure: if 10101010 was sent, the receiver would obtain 10011 if the
third, sixth, and eighth bits were deleted, but would obtain
$10?01?1?$ if the bits were erased.  The capacity of this channel
remains unknown.  See \cite{Mercier,MitzSurvey} for recent overviews.

In this work, we consider the zero-error capacity for the related
setting where up to $pn$ bits can be deleted.  For brevity, we refer
to the this variation as the {\em adversarial deletion channel}, as we
can imagine the channel as an adversary deleting bits in an attempt to
confuse the decoder, and the goal is choose a codebook that allows
successful decoding against this adversary.  Our main question of
interest is to consider what values of $p$ allow non-zero capacity in
this setting.  It is immediately clear, for example, that we require
$p < 0.5$, as an adversary that can delete half the bits can arrange
for any sent string to be received as either a string of all 0's or
all 1's.

Although this question appears quite natural, it does not seem to have
been specifically tackled in the literature.  It is certainly implicit in the early
work on insertion and deletion channels, most notably the seminal
works by Levenshtein\cite{Levenshtein} and Ullman \cite{Ullman},
although in most early papers the focus was on a constant number of
deletions or insertions, not a constant fraction.  The question of
codes that can handle a constant fraction of worst-case insertions or
deletions was specifically considered by Schulman and Zuckerman
\cite{schulman1999agc}, but they did not focus on optimizing the rate.

We consider two approaches that give provable lower bounds on what
values of $p$ allow non-zero capacity.  Our first bound is derived
from a simple combinatorial argument.  We provide it as a baseline,
and because it may be possible to generalize or improve the argument
in the future.  Our better bound arises from considering the expected
length of the longest common subsequence (LCS) of two random
sequences, a subject which has come under study previously. (See, for
example, the work of Lueker \cite{Lueker} and the references therein.)

Formally, subsequences are defined as follows: given a sequence $X =
(x_1,x_2,\ldots,x_n)$, the sequence $Z = (z_1,z_2,\ldots,z_m)$ is a
subsequence of $X$ if there is a strictly increasing set of indices
$i_1,i_2,\ldots,i_m$ such that
$$z_j = x_{i_j}.$$  Given two sequences $X$ and $Y$, the longest common
subsequence is simply a subsequence of $X$ and $Y$ of maximal length.  
The connection between
the adversarial deletion channel and longest common subsequences is
quite natural.  If two strings of length $n$ have an LCS of length at
least $(1-p)n$, then they cannot both be simultaneously in the
codebook for an adversarial deletion channel, as the adversary can
delete bits so that the received string is a subsequence common to both
strings.

It is known that the expected length of the LCS of two sequences of
length $n$ chosen uniformly at random converges to
$\gamma n$ for large $n$ and a constant $\gamma$,
as we explain more fully below.
Lueker shows $\gamma < 0.8269$ \cite{Lueker}.  Our main result is that
this implies that for any $p < 1 -\gamma$, we can achieve a non-zero
rate for the adversarial deletion channel.  Hence we can prove that
for $p < 0.1731$, a non-zero rate is possible.  In fact we generalize
our argument, considering the expected of the LCS of two sequences
chosen in an alternative, but still random, fashion.  While we do not
have provable bounds based on this generalization, experiments suggest
that the approach allows further significant improvements on the bounds, as we 
explain in Section~\ref{sec:fom}.

\section{A Simple Combinatorial Bound:  The Bipartite Model}

Our first approach is to work with a bipartite graph that
represents both the possible codewords and the possible received
strings.  For notational convenience, let us treat $pn$ as an integer,
and let us assume without loss of generality that the adversary
deletes a full $pn$ bits.  (The arguments for an adversary that delete
up to $pn$ bits differ only in $o(1)$ terms.)
For the rest of this section, our bipartite graph
will have a set $S$ of vertices, with one for each of the $2^n$
possible $n$-bit codewords, and a set $R$ of vertices, with one for
each of the the $2^{(1-p)n}$ possible received strings.  An edge
connects a vertex $v$ of $S$ to a vertex $w$ of 
$R$ if $w$ is a subsequence of $v$.  A valid codebook consists of a
subset ${\cal C}$ of $S$ with the property that every $w \in R$ is a
neighbor of at most one vertex in ${\cal C}$.
We derive a bound using this graph in terms of the binary entropy
function $H(p) = -p \log_2 p - (1-p) \log_2 (1-p)$.

\begin{theorem}
\label{thm1}
The capacity of the adversarial deletion channel is greater than zero
whenever $1-2H(p)+p > 0$.
\end{theorem}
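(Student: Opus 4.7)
The plan is to construct a codebook ${\cal C}$ of size $2^{n(1 - 2H(p) + p) - o(n)}$ via a Gilbert--Varshamov style greedy argument on the bipartite graph $G$ described above. A valid code corresponds to a subset ${\cal C} \subseteq S$ whose neighborhoods $\{N(v) : v \in {\cal C}\}$ in $R$ are pairwise disjoint, so I would pick codewords one at a time, arguing that the set of ``forbidden'' next candidates (those whose neighborhood intersects $\bigcup_{v \in {\cal C}} N(v)$) grows slowly enough that ${\cal C}$ reaches the claimed size before we run out of candidates in $S$.

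The analysis reduces to two degree bounds. First I would bound $\max_v |N(v)|$: since each length-$(1-p)n$ subsequence of $v$ is indexed by a choice of $pn$ positions to delete, $|N(v)| \leq {n \choose pn} \leq 2^{nH(p) + o(n)}$. Second I would bound $\max_w |N(w)|$, the number of length-$n$ binary supersequences of a fixed string $w$ of length $(1-p)n$. The naive estimate obtained by specifying the positions of $w$'s bits in the supersequence and filling the remaining $pn$ positions with arbitrary bits yields ${n \choose pn} \cdot 2^{pn}$, but this substantially overcounts, since the same supersequence arises from many distinct insertion patterns. The main technical step is a sharper combinatorial count --- for example, by restricting attention to the canonical leftmost embedding of $w$ into a length-$n$ string --- that improves the supersequence bound enough so that the product $\max_v |N(v)| \cdot \max_w |N(w)|$ is at most $2^{n(2H(p) - p) + o(n)}$.

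With these bounds in place, the greedy argument is routine. After picking $k$ codewords, the portion of $R$ already covered has size at most $k \cdot \max_v |N(v)|$, and the number of $v \in S$ whose neighborhood meets this covered portion is at most $k \cdot \max_v |N(v)| \cdot \max_w |N(w)|$. We can extend ${\cal C}$ as long as this quantity is strictly less than $2^n$, giving $|{\cal C}| \geq 2^{n(1 - 2H(p) + p) - o(n)}$, which is exponentially large in $n$ whenever $1 - 2H(p) + p > 0$, yielding positive rate. The hard step is the supersequence bound: obtaining the $+p$ improvement over the $1 - 2H(p) - p$ rate that falls out of the naive counting requires careful handling of the overcounting of supersequences, and this is where the bulk of the combinatorial work lies; once that bound is in hand, the rest is standard Gilbert--Varshamov machinery.
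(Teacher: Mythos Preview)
Your overall scheme --- greedy selection on the bipartite graph, bounding how many codewords are ruled out at each step by a product of two degree bounds --- is exactly the paper's, but you have misidentified which of the two bounds carries the extra factor of $2^{-pn}$. The supersequence count is not the hard step; it is a classical exact formula: for \emph{every} $w \in R$, the number of length-$n$ binary supersequences of $w$ equals $\sum_{j=0}^{pn}{n \choose j} = 2^{nH(p)+o(n)}$, independently of $w$, and the leftmost-embedding argument you mention is precisely how one proves the upper bound at this level. So $\max_w |N(w)| = 2^{nH(p)+o(n)}$ and cannot be pushed lower. Combined with your bound $\max_v |N(v)| \leq {n \choose pn} = 2^{nH(p)+o(n)}$, the product of the two maxima is $2^{2nH(p)+o(n)}$, which only yields rate $1 - 2H(p)$, missing the $+p$.

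The $+p$ actually comes from the codeword side, via averaging rather than a max bound. Because every $w$ has the same degree $y = 2^{nH(p)+o(n)}$, the total number of edges is $|R|\cdot y = 2^{n(1-p+H(p))+o(n)}$, so the \emph{average} degree of a vertex in $S$ is $2^{n(H(p)-p)+o(n)}$. By Markov's inequality, at least half of $S$ has degree at most twice this average; call this set $S'$. Running your greedy argument inside $S'$, each chosen codeword forbids at most $2\cdot 2^{n(H(p)-p)}\cdot y = 2^{n(2H(p)-p)+o(n)}$ others, giving $|{\cal C}| \geq 2^{n(1-2H(p)+p)-o(n)}$. In short: treat the supersequence count as the easy exact step, and obtain the $+p$ by replacing $\max_v|N(v)|$ with the average degree on $S$ via double counting and Markov.
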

\begin{proof}
A standard fact regarding subsequences is that each vertex of $R$
appears as a subsequence of
$$y = \sum_{j=0}^{pn} {n \choose j} = 2^{n(H(p)+o(1))}$$ vertices of
$S$ \cite{Chvatal}.  (The correct value can be seen by considering the vertex $R$ of all 0s;
the proof of the equality is a simple induction.)
Hence on average a vertex of $S$ is
adjacent to $2^{(1-p)n}y/2^n \approx 2^{n(H(p)-p+o(1))}$ vertices of $R$.  In
particular we now consider the set $S'$ of vertices that are adjacent
to at most twice the average of the vertices of $S$, and this set is
at least half the size of $S$.

We choose vertices of $S'$ sequentially to obtain our codebook in the
following manner.  We choose an arbitrary vertex from $S'$.  When we
choose a vertex of $S'$, it is adjacent to at most
$2^{n(H(p)-p+o(1))}$ vertices of $R$.  We remove from $S'$ the chosen
vertex, as well as any vertex whose neighborhood intersects that of the chosen
vertex.  This removes at most $y
2^{n(H(p)-p+o(1))}$ vertices from $S'$.  Continuing in this manner, 
we find a sequence of at least $|S'|/(y2^{n(H(p)-p+o(1))}) \geq 2^{n(1-2H(p)+p-o(1))}$ codewords.
In particular, we can achieve positive rate for $p$ that satisfies $1-2H(p)+p > 0$.
\end{proof}

This yields a positive rate for the adversarial deletion channel for
$p$ up to about $0.1334$. 

Before moving on, we make some remarks.  First, Theorem~\ref{thm1} can
be generalized in a straightforward manner to larger alphabets.  The
important step is that for alphabets of size $K$, each string of
length $j$ appears in $$\sum_{i=0}^{j} {n \choose i} (K-1)^{n-1}$$ supersequences of length
$n$. 

Second, because of its generality, one would hope that this argument
could be improved to yield significantly better bounds.  As an example,
we have considered a modified argument where we restrict our initial
set $S$ of possible codewords to consist of exactly $\alpha n$
alternating blocks of 0s and 1s, where $\alpha < 1-p$.  The idea is
that a smaller number of blocks, corresponding to a higher average
block length, might yield better results;  such arguments have proven
useful in the past \cite{diggavi2006ito,DM3}.  Our modification, however, did not 
appear to change our derived bound significantly (we did not perform
a full optimization after calculations showed the bound appeared to 
remain below $0.134$ with this choice of $S$).  

Obtaining improved results would appear to require a better
understanding of the structure of the bipartite graph: in particular,
greater insight into the degree distribution and neighborhood
structure of the vertices of $S$.  We circumvent this barrier
by moving to an alternate representation.

\section{A Bound from Longest Common Subsequences}

Our improved bound makes use of known results on longest common
subsequences of random strings.  As stated earlier, if two strings of
length $n$ have an LCS of length at least $(1-p)n$, then they cannot
both simultaneously be in the codebook for an adversarial deletion
channel.  This suggests a natural attack on the problem.  Consider
the graph where there is a vertex for each $n$-bit string.  The edges
of this graph will connect any pair of vertices that share a common
subsequence of length at least $(1-p)n$.  Then an independent set in
this graph corresponds to a valid codebook for the adversarial
deletion channel, in that no two codewords can be confused by only
$pn$ deletions.  We remark that this connection between codes and
independent sets is fairly common in the setting of deletion channels;
see, for example, \cite{sloane2002sdc}.

We require some basic facts, which we take from \cite{Lueker}; for an
early reference, see also \cite{Chvatal}.  Let $L(X,Y)$ denote the
length of the LCS of two strings $X$ and $Y$, and let $L_n$ be the
length of the LCS of two strings chosen uniformly at random.  By subadditivity
one can show that there exists a constant $\gamma > 0$ such that
$$\lim_{n \rightarrow \infty} \frac{\E[L_n]}{n} = \gamma.$$ The exact
value of $\gamma$ is unknown; \cite{Lueker} finds what appears to be
the best known bounds, of $0.788071 \leq \gamma \leq 0.826820$, and
computational experiments from \cite{Matzinger} suggest $\gamma
\approx 0.8182$.

The relationship between $\gamma$ and the threshold for zero-error decoding
is given by the following theorem.  
\begin{theorem}
\label{thm2}
The capacity of the adversarial deletion channel is greater than zero
whenever $p < 1 - \gamma$.  
\end{theorem}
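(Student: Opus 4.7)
The plan is to show that the confusion graph defined in the paragraph preceding the theorem contains an exponentially large independent set whenever $p < 1-\gamma$. Since an independent set in this graph corresponds exactly to a valid codebook for the adversarial deletion channel, this immediately yields a positive rate and hence positive capacity. The two ingredients I would combine are (i) a concentration bound showing that for two uniformly random codewords the LCS is unlikely to exceed $(1-p)n$, and (ii) a standard greedy or Tur\'an-type argument extracting a large independent set from a sparse graph.

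For the first ingredient, I would view $L(X,Y)$ as a function of the $2n$ independent uniform bits comprising $X$ and $Y$, and observe that it is $1$-Lipschitz in each coordinate: flipping a single bit can alter the length of the LCS by at most one, since any common subsequence of the original pair becomes a common subsequence of the modified pair after deleting at most one position. McDiarmid's bounded-differences inequality then gives
$$\Pr\bigl[L(X,Y) \ge \E[L(X,Y)] + t\bigr] \le \exp(-t^2/n).$$
Picking $\epsilon > 0$ small enough that $\gamma + \epsilon < 1 - p$, and using $\E[L(X,Y)] \le (\gamma + \epsilon)n$ for all sufficiently large $n$, I would set $t = (1 - p - \gamma - \epsilon)n$ and conclude that $\Pr[L(X,Y) \ge (1-p)n] \le \exp(-cn)$ for some constant $c = c(p) > 0$.

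For the second ingredient, I would convert this tail bound into a degree bound on the confusion graph. Since the edge probability between two independent uniform vertices is at most $\exp(-cn)$, the average degree $\bar d$ of the graph is at most $2^n \exp(-cn)$. Tur\'an's theorem (equivalently, the standard greedy independent set construction) then produces an independent set of size at least $2^n / (\bar d + 1) \ge 2^n / (2^n \exp(-cn) + 1) = 2^{\Omega(n)}$. This gives a codebook of exponential size, hence a positive rate.

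The only place where real content enters is the concentration step, so that is where I would be most careful: one has to ensure that the slack $((1-p) - \gamma)n$ is strictly positive and linear in $n$, which is exactly what is needed to turn McDiarmid's inequality into an exponentially decaying probability. This is guaranteed by the hypothesis $p < 1 - \gamma$ together with the convergence $\E[L_n]/n \to \gamma$ recorded above the statement. Everything else reduces to routine counting and a standard sparse-graph independent set bound, so I do not anticipate any additional obstacles.
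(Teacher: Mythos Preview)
Your proposal is correct and follows essentially the same route as the paper: a concentration inequality (you use McDiarmid on the $2n$ bits, the paper uses Azuma on the Doob martingale over the $n$ bit-pairs with Lipschitz constant $2$) to show that two uniformly random strings share a long common subsequence with exponentially small probability, then Tur\'an's theorem to extract an exponentially large independent set from the resulting sparse confusion graph. The only cosmetic difference is that the paper phrases Tur\'an's bound in terms of the total edge count rather than the average degree; the two formulations are equivalent here.
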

\begin{proof}
For any $\gamma^* > \gamma$, for $n$ sufficiently large we have that
$L_n \leq \gamma^* n$.  Further, standard concentration inequalities
show that the length of the LCS of two strings $X$ and $Y$ chosen
uniformly at random is concentrated around its expectation $L_n$.
Such results are straightforward to derive; see, for example,
\cite{ArratiaWaterman}.  Formally, let $Z_i$ be the pair of bits in
position $i$ of the two strings.  The we have a standard Doob
martingale given by $A_i = {\bf E}[L(X,Y)~|~Z_1,Z_2,\ldots,Z_i]$.
The value of the bits $Z_i$ can change the value of $A_i$ by at most
2; one can simply remove these bits if they are part of the LCS, so
their value can only change the expected length of the LCS by at most 2.
Applying standard forms of Azuma's inequality (see, for example,
\cite{MitzenmacherUpfal}[Chapter 12] as a reference), we can conclude
$$\Pr \left [ L(X,Y) \geq L_n + \epsilon n \right ] \leq 2^{-f(\epsilon) n}$$
for some function $f$ of $\epsilon$ (that can depend on $q$).  

Now, let us focus on the graph described previously, where vertices correspond
to strings of length $n$, and edges connect vertices that share an LCS of length at least $\gamma' n$ for some $\gamma' > \gamma$.  It follows from the above that for sufficiently large $n$
the number of edges in the graph is at most 
$${2^n \choose 2}2^{-cn},$$
for some constant $c$.  
This is because the probability a randomly chosen edge is in the graph is at most $2^{-f(\epsilon) n}$ for 
an appropriate $\epsilon$.  (The probabilistic statement derived from Azuma's inequality would include edges
that correspond to self-loops, but our graph does not includes self-loop edges;  since we seek an upper bound
on the number of edges, this is not a problem.)

We can now apply a standard result on independent sets, namely Turan's theorem, which provides that  
in a graph with $j$ vertices and $k$ edges, there is an independent set of size at least $j^2/(2k+1)$
\cite{Turan}.  
In this context, Turan's theorem yields the existence
of an independent set, or codebook, of size at least 
$$\frac{2^{2n}}{2{2^n \choose 2}2^{-cn}+1} \geq 2^{cn-1}.$$  
That is, we have a code of positive rate whenever the fraction of deleted bits $p$ is at most $1-\gamma'$,
giving a non-zero rate for any $p > 1-\gamma$.
\end{proof}

This yields a provable positive rate for the adversarial deletion channel for
$p$ up to about $0.1731$, and an implied bound (assuming $\gamma \approx 0.8128$) of $0.1872$.

We note that while Theorem~\ref{thm1} uses a bipartite graph, it
should be clear that the process described greedily finds an
independent set on the corresponding graph used in Theorem~\ref{thm2},
where edges on the $2^n$ possible codewords correspond to paths of
length 2, or to shared subsequences, on the bipartite graph.
Theorem~\ref{thm1} results in weaker bounds, which is unsurprising,
given the sophisticated technical machinery required to obtain
the bounds on the expected LCS of random sequences.

Theorem~\ref{thm2} can also be generalized to larger alphabets, as the
techniques of \cite{Lueker} provide bounds on the corresponding
constant $\gamma_K$ for the expected LCS of strings chosen uniformly
at random from $K$-ary alphabets.

It might be tempting to hope that $1-\gamma$ is a
tight threshold.  Consider the graph obtained when $p$ is slightly
larger than $1-\gamma$; this graph will be almost complete, as almost
all edges will exist, and therefore it might seem that there might be
no independent set of exponential size to find. However, it might be possible
to obtain a large subgraph that is less dense, which would in turn yield 
a better threshold.  For
example, the string consisting of all 0s has very small degree
(comparatively) in the graph; by finding a suitable set of ``small
degree'' vertices, one might hope to increase the threshold for
non-zero capacity.

In the next section, we use this insight to generalize our argument above
in order to obtain, at least empirically, improved bounds. 

\section{Using First-Order Markov Chains}
\label{sec:fom}

To generalize the argument of Theorem \ref{thm2}, we consider random strings generated by
symmetric first-order Markov chains.  That is, the initial bit in the string is
0 or 1 with probability $1/2$; thereafter, subsequent bits are
generated independently so that each matches the previous bit with
probability $q$, and changes from $b$ to $1-b$ with probability $1-q$.
When $q = 1/2$, we have the standard model of randomly generated
strings.  Below we consider the case where $q > 1/2$.  Again, first-order Markov chains have proven useful in the
past in the study of deletions channels \cite{diggavi2006ito,DM3}, so
their use here is not surprising. 

 Subadditivity again gives us that
under this model, the expected longest common subsequence of two
randomly generated string again converges to $\gamma_q n$ for some
constant $\gamma_q$.  We claim the following theorem.
\begin{theorem}
\label{thm3}
For any constant $q$, the capacity of the adversarial deletion channel is greater than zero
whenever $p < 1 - \gamma_q$.  
\end{theorem}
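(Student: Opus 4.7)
The plan is to follow the outline of Theorem~\ref{thm2}, but to replace the uniform measure on $\{0,1\}^n$ by the Markov measure and then restrict attention to a ``typical'' subset $T\subseteq\{0,1\}^n$ on which that measure is essentially flat. The nearly uniform distribution on $T$ is what will let us convert a concentration statement for $L(X,Y)$ under the Markov measure into an edge count to which Tur\'an's theorem can be applied.

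First I would define $T$ as the set of strings whose number of consecutive same-bit transitions lies in $[(q-\delta)(n-1),(q+\delta)(n-1)]$ for a small $\delta>0$. A Chernoff bound shows that a Markov-generated string lies in $T$ with probability $1-o(1)$, and every $x\in T$ has Markov probability in the range $[\,2^{-(H(q)+\xi)n},\,2^{-(H(q)-\xi)n}\,]$ with $\xi=\xi(\delta)\to 0$ as $\delta\to 0$; in particular $|T|\geq (1-o(1))\cdot 2^{(H(q)-\xi)n}$. Next I would re-run the Doob martingale argument from Theorem~\ref{thm2} with $Z_i=(X_i,Y_i)$, now under the Markov measure, together with $\E[L(X,Y)]/n\to\gamma_q$, to obtain
$$\Pr\bigl[L(X,Y)\geq \gamma^* n\bigr]\leq 2^{-f(\gamma^*-\gamma_q)\,n}$$
for any $\gamma^*>\gamma_q$ and a suitable positive function $f$.

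Finally I would count edges within $T$. Let $N$ be the number of unordered pairs $\{x,y\}\subseteq T$ with $L(x,y)\geq\gamma^* n$. Each such pair contributes at least $2^{-2(H(q)+\xi)n}$ to the Markov probability of drawing a bad pair, so
$$N\cdot 2^{-2(H(q)+\xi)n}\leq 2^{-f(\gamma^*-\gamma_q)\,n},$$
which gives $N\leq 2^{(2H(q)+2\xi-f)n}$. Tur\'an's theorem applied to the subgraph induced on $T$ then yields an independent set of size at least
$$\frac{|T|^2}{2N+1}\geq 2^{(f-4\xi)n-O(1)}.$$
Choosing $\gamma^*\in(\gamma_q,1-p)$, which is a nonempty interval precisely when $p<1-\gamma_q$, and then taking $\delta$ (hence $\xi$) small enough that $4\xi<f(\gamma^*-\gamma_q)$, produces a codebook of exponential size, and hence positive rate, as claimed.

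The step I expect to be the main obstacle is verifying bounded differences for the Doob martingale when the underlying measure is Markov rather than i.i.d.: changing $Z_i$ now also shifts the conditional distribution of $Z_{i+1},\ldots,Z_n$, so the ``simply remove those bits from the LCS'' argument used in the proof of Theorem~\ref{thm2} does not apply verbatim. I would handle this either by coupling the two continuation chains starting from the two possible values of $Z_i$ (they coalesce in $O(1)$ expected time because the symmetric chain has a constant spectral gap, so any two sample paths differ in at most a constant-expectation number of positions and therefore contribute at most a constant swing to $L$), or by appealing to a Markov-chain version of Azuma's inequality. Once a constant per-coordinate bounded difference is in hand, the remaining Tur\'an and typical-set bookkeeping is a routine adaptation of Theorem~\ref{thm2}.
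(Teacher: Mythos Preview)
Your proposal is correct, and your identification of the Markov bounded-differences issue together with the coupling fix is exactly the point the paper also singles out: the paper argues, just as you do, that from two different values of $Z_i$ the two continuation chains re-synchronize in an expected constant number of steps, so $|A_i-A_{i-1}|$ is bounded by a constant and Azuma applies.

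Where you genuinely diverge from the paper is in how you pass from the concentration bound to an independent set. You restrict to a typical set $T$ on which the Markov measure is essentially flat, convert the tail bound $\Pr[L(X,Y)\ge\gamma^*n]\le 2^{-fn}$ into an edge count on $T$ via the pointwise lower bound on $p(x)p(y)$, and then invoke the ordinary Tur\'an bound on the induced subgraph. The paper instead avoids the typical-set detour entirely: it runs a \emph{weighted} probabilistic-method deletion on the full vertex set $\{0,1\}^n$, keeping each string $s$ with probability $2^n z\,p(s)$ proportional to its Markov weight, so that the expected number of surviving vertices is $z2^n$ while the expected number of surviving edges is exactly $z^2 2^{2n}\Pr[L(X,Y)\ge\gamma' n]$; choosing $z$ appropriately and then checking $2^n z\,p(s)\le 1$ finishes the argument. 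The paper's route is a bit slicker --- the Markov measure is absorbed directly into the sampling weights rather than excised by a typical-set restriction --- and it loses no $4\xi$ in the exponent. Your route is more modular, reducing cleanly to the uniform-measure argument of Theorem~\ref{thm2}, at the cost of the extra $\delta,\xi$ bookkeeping. Both yield the same threshold $p<1-\gamma_q$.
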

\begin{proof}
Following the proof of Theorem~\ref{thm2}, let $L_n$ be the expected
length of the LCS of two strings
$X$ and $Y$ of length $n$ chosen according to the above process.  

For any $\gamma_q^* > \gamma_q$, for $n$ sufficiently large we have
that $L_n \leq \gamma^* n$.  As before, we can make use of
concentration, although the use of Azuma's inequality is a bit
more subtle here.  Again, let $Z_i$ be the pair of bits in position
$i$ of the two strings, so that we have a standard Doob martingale
given by $A_i = {\bf E}[L(X,Y)~|~Z_1,Z_2,\ldots,Z_i]$.  The value of
the bits $Z_i$ are important, in that the expected value of the LCS
conditioned on these bits depends non-trivially on whether the bits
are the same or different.  However, we note that if the bits do not
match, then after an expected constant number of bits (with the
constant depending on $q$), the bits will match again, and hence the
value of the bits $Z_i$ still can change the value of $A_i$ by at most
a constant amount.  So as before we, we can conclude
$$\Pr \left [ L(X,Y) \geq L_n + \epsilon n \right ] \leq 2^{-f(\epsilon) n}$$
for some function $f$ of $\epsilon$.  

We now use a probabilistic variation of Turan's theorem to show the
existence of suitably large independent sets in the graph where 
vertices correspond to strings of length $n$, and edges connect
vertices that share an LCS of length at least $\gamma' n$ for some
$\gamma' > \gamma_q$.  Our proof follows a standard argument based on
the probabilistic method (see, e.g., \cite{MitzenmacherUpfal}[Theorem
6.5]).  For each sequence $s$ of length $n$, let $\Delta(s)$ be the
number of times the sequence changes from 0 to 1 or 1 to 0 when read
left to right.  Then the probability of $s$ being generated by our first
order Markov chain process is 
$$p(s) = \frac{1}{2}q^{n-1-\Delta(s)}(1-q)^{\Delta(s)}.$$
Consider the following randomized algorithm for generating an independent
set:  delete each vertex in the graph (and its incident edges) independently
with probability $1 - 2^n z p(s)$ for a $z$ to be determined later, and then for
each remaining edge, remove it and one of its adjacent vertices.

If $Q$ is the number of vertices that survive the first step, we have
$$\E[Q] = \sum_{s \in \left \{ 0,1 \right \}^n} 2^n z p(s) = z2^n.$$
Now let $R$ be the number of edges that survive the first step. Then
$$\E[R] \leq \sum_{s,t \in \left \{ 0,1 \right \}^n} z^2 2^{2n} p(s)p(t){\bf 1}(L(s,t) > \gamma' n).$$
Notice that $\E[R]$ is bounded by $z^2 2^{2n} \cdot \Pr[ L(X,Y) \geq \gamma' n ]$
for a random $X$ and $Y$ chosen according to the first order Markov process.  
As argued previously, we therefore have $\E[R] \leq z^2 2^{(2-c)n}$ for some constant $c$.  
Choosing $z = 2^{cn-n-1}$ gives an expected independent set of size at least
$\E[Q - R] \geq 2^{cn-2}$, showing that there must exist an independent set which gives
a non-zero rate for any $p < 1 - \gamma_q$.  

It remains to check that in fact our choice of $z$ ensures that 
$1 - 2^n z p(s)$ is indeed a valid probability;  that is, we need
$2^n z p(s) \leq 1$ for all $s$.   We have 
$2^n z p(s) = 2^{cn-1} p(s)$;  in our argument we can in fact choose $c$ to be a
sufficiently small constant (specifically, choose $c < \log_2 (1/q)$) to ensure
that this is a valid probability.
\end{proof}

While we do not have a formal, provable bound on $\gamma_q$ for $q > 1/2$, 
such bounds for specific $q$ might be obtained using the methods of
\cite{Lueker}.  We leave this for future work.  Besides requiring
generalizing these methods to strings chosen non-uniformly, we note
that another difficulty is that these techniques require large-scale 
computations where the tightness of the bound that can be feasibly obtained
may depend on $q$.  

Empirically, however, we can simply run our first-order Markov chains
to generate large random strings and compute the LCS to estimate the
behavior as $q \rightarrow 1$.  (For more detailed estimation methods
for LCS problems, see for example \cite{Matzinger}.)  Our experiments
suggest that the LCS of two sequences randomly generated in this
manner may converge to $0.75n$ or something near that quantity.  For
example, we found the length of the LCS for 1000 pairs of sequences of
length $100000$ with $q = 0.95$; the average length was $77899.4$,
with a minimum of $77499$ and a maximum of $78375$.  For 1000 pairs of
length $100000$ with $q = 0.99$, the average length of the LCS was
$77479.8$, with a minimum of $76083$ and a maximum of $78831$.  For
1000 pairs of length $100000$ with $q = 0.999$, the average length of
the LCS was $75573.2$, with a minimum of $68684$ and a maximum of
$81483$.  Naturally the variance is higher as runs of the same bit
increase in length;  however, the trend seems readily apparent.

It is also worth noting that Theorem~\ref{thm3} can generalize
beyond choosing vertices according to a distribution given by
first-order Markov chains.  The key aspects -- subadditivity providing
a limiting constant, concentration using Azuma's inequality, and some
reasonable variation of Turan's theorem based on the probabilistic
method -- might naturally be applied to other distributions over
$n$-bit strings as well.  Indeed, we emphasize that even if it can be
shown that $\gamma_q$ approaches $0.75$ as $q$ approaches 1, it may
still be that this does not bound the full range of values of $p$ for
which the adversarial deletion channel has nonzero capacity.

\section{Conclusion}

We have given a proof that for any $p < 1 - \gamma_q$ the rate of the
adversarial deletion channel is positive, where $\gamma_q$ is the
value such that two strings of length $n$ chosen according to a
symmetric first-order Markov chain that flips bits with probability
$1-q$ has, in the limit as $n$ goes to infinity, an LCS of expected
length $\gamma_q n$.  For $q = 1/2$, bounds on $\gamma_q$ are known.

Many natural open questions arise from our work.  Can we find tighter
bounds for $\gamma$? Can we derive a stronger bound using $\gamma_q$?
Are there better approaches for choosing codebooks in this setting?
What more can we say about the structure of the graphs underlying our
arguments?  All of these questions relate to our main question here:
can we improve the bounds on $p$ for which the adversarial
deletion channel has non-zero rate?

Our theorems also implicitly give bounds on the rate (or equivalently
the size of the codebook) obtainable away from the threshold.  We have
not attempted to optimize these bounds, as we expect our arguments are
far from tight in this respect.  For example, we note that in
Theorem~\ref{thm2}, the bound on the size of the independent set
obtained can seemingly be strengthened; specifically, the
probabilistic method gives a lower bound on the size of the
independent set of a graph with vertex set $V$ and vertex degrees
$d(u)$ for $u \in V$ of $\sum_{u \in V} \frac{1}{d(u)+1}$
\cite{spencer1994ten}.  Such bounds seem difficult to use in this
context, and would would not appear to change the threshold obtained
with this proof method without additional insights.  Finding good
bounds on the rate for this channel, and of course finding good explicit
codes, remain open questions.  

Finally, we mention that for a large class of adversarial
channels known as {\em causal adversary} or {\em memoryless channels},
it is known that no positive rate is achievable for $p>0.25$ \cite{GS10, LJD09}.
Proving an equivalent upper bound for the adversarial deletion channel
is a tantalizing goal, especially in light of the result from Section
\ref{sec:fom} suggesting that for $p$ close to $0.25$, positive rate
{\em is} achievable.  

Unfortunately, it is not clear how to extend the techniques
of \cite{GS10, LJD09} to the adversarial deletion channel. At a high level, the results
of \cite{GS10, LJD09} rely on the fact that, given two codewords
$\mathbf{x}$ and $\mathbf{x'}$ with Hamming distance $d$, an adversary
can cause a ``collision" between $\mathbf{x}$ and $\mathbf{x'}$ by
{\em flipping} $d/2$ bits of each. So as long as the adversary can
find two codewords of Hamming distance at most $n/2$, just $n/4$ bit
flips are needed to cause a collision. 

In contrast, it is possible for two $n$-bit strings $\mathbf{x}$ and
$\mathbf{x'}$ of Hamming distance $d$ to have no common subsequence of
length $n-d+1$. That is, it might require $d$ deletions to both
$\mathbf{x}$ and $\mathbf{x'}$ to cause a collision, while it would
only require $d/2$ bit-flips. 
This appears to be the fundamental difficulty in proving that for any
specific $p<0.5$, no 
positive rate is achievable for the adversarial deletion
channel. Proving such an upper bound 
is perhaps our biggest open question.

\section*{Acknowledgments}

Ian Kash is supported by a postdoctoral fellowship from the Center for
Research on Computation and Society.
Michael Mitzenmacher performed part of this work while visiting
Microsoft Research New England, and thanks Madhu Sudan and Venkat
Guruswami for early discussions on this problem.  His work is also
supported by NSF grants CNS-0721491, CCF-0915922, and IIS-0964473.
Justin Thaler is supported by the 
Department of Defense (DoD) through the National Defense Science \&
Engineering Graduate Fellowship (NDSEG)  
Program.  Jonathan Ullman is supported by NSF grant CNS-0831289.
The authors also thank Varun Kanade, Tal Moran, and Thomas Steinke for helpful discussions.

\bibliographystyle{plain}

\end{document}